\definecolor{myurlcolor}{rgb}{0,0,0.7}
\def\be{\begin{equation}}
\def\ee{\end{equation}}
\def\bea{\begin{eqnarray*}}
\def\eea{\end{eqnarray*}}
\def\ot{\otimes}
\theoremstyle{plain}
\newtheorem{thrm}{\protect\theoremname}
\newtheorem{exam}[thrm]{Example}
\providecommand{\theoremname}{Theorem}
\newcommand{\iinner}[2]{\langle #1 | #2\rangle}
\newcommand{\out}[2]{| #1\rangle\langle #2 |}
\DeclareMathOperator{\trace}{tr}
\newcommand{\ptr}[2]{\trace_{#1}({#2})}
\newcommand{\tr}[1]{\ptr{}{#1}}
\newcommand{\id}{\mathbb{I}}
\newcommand*{\myproofname}{Proof}
\def\cC{\mathcal{C}}\def\cD{\mathcal{D}}
\def\cH{\mathcal{H}}\def\cI{\mathcal{I}}
\def\cO{\mathcal{O}}
\def\cS{\mathcal{S}}
\def\bE{\mathbf{E}}
\def\bP{\mathbf{P}}
\theoremstyle{definition}
\theoremstyle{remark}
\begin{document}

 \author{Sunho Kim}
 \email{kimsunho81@hrbeu.edu.cn}
 \affiliation{School of Mathematical Sciences, Harbin Engineering University, Harbin 150001, China}

  \author{Chunhe Xiong}
 \email{xiongchunhe@zju.edu.cn}
\affiliation{School of Computer and Computing Science, Zhejiang University City College, Hangzhou 310015, China}

\author{Asutosh Kumar}
 \email{asutoshk.phys@gmail.com}
 \affiliation{P.G. Department of Physics, Gaya College, Magadh University, Rampur, Gaya 823001, India}
 \affiliation{Harish-Chandra Research Institute, HBNI, Chhatnag Road, Jhunsi, Allahabad 211019, India}
 \affiliation{Vaidic and Modern Physics Research Centre, Bhagal Bhim, Bhinmal, Jalore 343029, India}

 \author{Junde Wu}
 \email{wjd@zju.edu.cn}
 \affiliation{School of Mathematical Sciences, Zhejiang University, Hangzhou 310027, China}

\title{Converting coherence based on positive-operator-valued measures into entanglement}
\begin{abstract}
Quantum resource theories provide a diverse and powerful framework for extensively studying the phenomena in quantum physics. Quantum coherence, a quantum resource,
is the basic ingredient in many quantum information tasks.
It is a subject of broad and current interest in quantum information, and many new concepts have been introduced and generalized since its establishment.
Here we show that the block coherence can be transformed into entanglement via a block incoherent operation. Moreover, we find that the coherence based on positive-operator-valued measures (POVMs) associated with block coherence through the Naimark extension acts as a potential resource from the perspective of generating entanglement. Finally, we discuss avenues of creating entanglement from POVM-based coherence, present strategies that require embedding channels and auxiliary systems, give some examples, and generalize them.

\end{abstract}
\maketitle

\section{Introduction}
Quantum systems outperform classical physical systems and exhibit exotic properties. Among these properties of quantum systems, certain properties have been recognised as vital resources for quantum information processing tasks. These realizations have spurred the development of general quantum resource theories (QRTs) \cite{Chitambar, Brandao}.

QRTs  provide a structured framework that quantitatively describes quantum properties such as entanglement \cite{Horodecki}, coherence \cite{Streltsov1, Streltsov2}, purity, quantum reference frames \cite{Bartlett, Gour}, quantum thermodynamics \cite{Goold}, contextuality \cite{Kochen}, nonlocality \cite{Bell, Brunner}, non-Gaussianity \cite{Weedbrook} and many others.
Each of these resources is classified as either classical or quantum, static (\emph{i.e.}, quantum states) or dynamic (\emph{i.e.}, quantum channels) and can be used for numerous interesting quantum information processing tasks.
Common ingredients of all QRTs include free states, free operations, and quantum resources. These components are not independent of each other in the sense that free operations alone cannot convert free states into resource states. Moreover, it is possible to asymptotically reverse many QRTs  if the set of free operations is maximal (i.e., consists of all possible operations that can not generate a resource from free states).

Coherence is a fundamental aspect of quantum theory \cite{Leggett}, felicitating the defining properties from superposition principle to quantum correlations. 
Several attempts have been made in recent years to characterize and quantify coherence. Baumgratz \emph{et al.} \cite{Baumgratz} provided a strict and clear framework for quantifying coherence similar to entanglement for general quantum systems. In the resource theory of quantum coherence, ``incoherent quantum states" are regarded as free states, and ``incoherent quantum channels" (operations which cannot generate coherence) are considered free operations that map a set of incoherent states to itself. Since then various measures of quantum coherence such as the K-coherence, robustness of coherence, distance-based coherence, relative entropy of coherence, the $l_1$-norm of coherence and partial coherence, etc \cite{Girolami, Lostaglio, Shao, Pires, Rana, Rastegin, Napoli, Yu, Luo, Bu, Xiong} have been introduced. 

The essence of both quantum correlations (including entanglement) and quantum coherence lies in the superposition principle. It means that the study of quantitative associations between them, namely how one resource is converted into another, is interesting and worthwhile.
Recently, an interesting framework for interconversion between coherence and entanglement, which provides a quantitative connection between them, was proposed by Streltsov \emph{et al.} \cite{Streltsov4}. Furthermore, it has been proved that quantum coherence can create other resources via incoherent operations \cite{Ma, Tan,  Mukhopadhyay} beyond quantum entanglement such as quantum Fisher information \cite{Kay, Genoni}, superradiance and quantum discord \cite{Ollivier}, etc.

Coherence, like entanglement, is a prerequisite for the advantage of quantum technologies. It provides quantum advantages in various tasks such as quantum thermodynamics \cite{Goold, Kammerlander}, quantum algorithms \cite{Hillery}, steering \cite{Mondal}, and quantum biology \cite{Huelga}. It is also beneficial for cooling in quantum refrigerators \cite{Mitchison} and extracting work using quantum thermal machines \cite{Kammerlander, Korzekwa}.

In the resource theory of typical quantum coherence, the free states are represented by diagonal matrices in fixed orthonormal bases. In other words, they can be considered to be caused by projective (von Neumann) measurements.
Thus, there is an inherent connection between coherence and quantum measurement. This notion was generalized by extending standard coherence to block-coherence based on general projective measurement \cite{Aberg}. 
It is well known that positive-operator-valued measures (POVMs) describe the most general quantum measurements (observables) and can provide a real operational advantage compared to any projective measurement (see Ref. \cite{Oszmaniec} for example). 
Hence, it is natural to develop the notion of coherenece with respect to general (nonprojective) quantum measurements.
The notion of POVM-coherence \cite{Bischof1} has been introduced very recently.  Using the rich structure of POVMs compared to projective measurements, POVM-based coherence was embedded into a resource-theoretic framework that generalizes the standard resource theory of coherence. The notion of POVM-based coherence requires Naimark extension, i.e., embedding the states and operations into a higher-dimensional space. Such a structure allows for a simpler derivation of general results, and describes the possibility of  directly implementing a POVM in an experiment.
In another recent work \cite{Bischof2} authors explored features of POVM-coherence that are distinct from standard coherence theory. In particular, they provided an operational interpretation to the concept of coherence with respect to general quantum measurements. They showed that {\it the relative entropy of POVM-coherence is equal to the cryptographic randomness gain}. 
%
Furthermore, in Ref. \cite{Ujjwal} authors found that an arbitrary set of quantum state vectors has a connection with POVMs. 
The results and findings in \cite{Bischof1, Bischof2, Ujjwal} are expected to empower coherence in quantum information technologies employing arbitrary nonprojective measurements.

We already know that entanglement plays a central role in quantum communication and computation \cite{Nielsen}, and coherence underlies entanglement. Therefore, it is natural to investigate if there is any connection between POVM-coherence and other quantum resources. In particular, if it is possible to convert POVM-coherence into entanglement.  
In this paper, we provide a connection of POVM-coherence with entanglement. Especially, we introduce a mathematical and logical approach to derive entanglement from POVM-based coherence. This is interesting from the perspective of quantum resource theory. We expect that with physical and more operational interpretations of POVM-coherence in future, there will be interesting applications in quantum information.

A further positive aspect of our present study is the following observation. 
Quantum states in small spaces are clearly easy to maintain and transport, so inducing entanglement from small systems to larger systems will provide remarkable benefits for resource generation and maintenance. From this viewpoint, POVM-based coherence has the advantage of resource retention and can provide maximal entanglement in large-dimensional systems depending on the given POVM and original quantum states. Furthermore, this scenario allows us to optimize the amount of entanglement generated in a given resource state by selecting different POVMs. 

The paper is structured as follows. Section \uppercase\expandafter{\romannumeral2} introduces block-coherence and POVM-based coherence. In Sec. \uppercase\expandafter{\romannumeral3}, we propose a protocol that generates entanglement from block-coherence via a bipartite incoherent operation.
In Sec. \uppercase\expandafter{\romannumeral4}, we explain that POVM-based coherence associated with block coherence through the Naimark extension, acts as a potential resource from the perspective of generating entanglement. We discuss strategies that require embedding channels and auxiliary systems for creating entanglement from POVM-based coherence, first in some simple systems and later generalize them.
Section \uppercase\expandafter{\romannumeral5} concludes with a summary.

\section{Resource theory of block-coherence and POVM-based coherence}

\subsection{Block-coherence}

We know that the incoherent or coherence-free states are associated with quantum measurement (von Neumann measurement). From this point on, for a projective measurement $\bP = \{P_i\}$ in which the orthogonal operators $P_i$ have arbitrary ranks, Bischof \emph{et al.} \cite{Bischof1} presented the resource theory of block-coherence based on the general measures, for the degree of superposition introduced by {\AA}berg \cite{Aberg}, as follows:

\begin{enumerate}
  \item The set $\cI_{\textmd{BI}}$ of block incoherent (or free) states is defined by
\be
\cI_{\textmd{BI}} = \{\rho_{\textmd{BI}} | \rho_{\textmd{BI}} = \sum_iP_i \sigma P_i = \Delta[\sigma],\quad \sigma \in\cD \}
\ee
where $\cD$ is the set of quantum states and $\Delta$ is defined as the block-dephasing map.
  \item A completely-positive and trace-preserving (CPTP) map $\Lambda_{\textmd{MBI}}$ is said to be a (maximally) block-incoherent operation if and only if it maps any block incoherent state to a block-incoherent state, i.e., $\Lambda_{\textmd{MBI}}[\cI_{\textmd{BI}}] \subseteq \cI_{\textmd{BI}}$.
We denote the set of all (maximally) block-incoherent operations as $\cO_{\textmd{MBI}}$.
\end{enumerate}

A functional $C(\rho,\bP)$ on the space of quantum states is regarded as a block-coherence measure with respect to the projective measurement $\bP$ if it fulfills the following:
B1 (\emph{Faithfulness}): $C(\rho,\bP) \geq 0$, and $C(\sigma,\bP) = 0$ if and only if $\sigma\in \cI_{\textmd{BI}}$.
B2 (\emph{Monotonicity}): $C(\Lambda_{\textmd{MBI}}[\rho],\bP)\leq C(\rho,\bP)$ for all $\Lambda_{\textmd{MBI}}\in \cO_{\textmd{MBI}}$.
B3 (\emph{Convexity}): $C(\rho,\bP)$ is convex in $\rho$.

Several measures of block-coherence have been proposed in \cite{Aberg, Bischof1}. Here we introduce two block-coherence measures for our purpose: distance-based block-coherence $C_d$ and the relative entropy of block-coherence $C_r$, defined as
\be
C_d(\rho,\bP) =\min_{\sigma\in\cI_{\textmd{BI}}}d(\rho,\sigma),
\ee
where the distance $d$ is contractive, $i.e.,$ $d(\Lambda[\rho],\Lambda[\sigma]) \leq d(\rho,\sigma)$ for any CPTP map $\Lambda$, and
\be\label{eq:3}
C_r(\rho,\bP) =\min_{\sigma\in\cI_{\textmd{BI}}}S(\rho\parallel\sigma) = S(\Delta[\rho]) - S(\rho),
\ee
where  $S(\rho) = -\tr{\rho\log\rho}$ is the von Neumann entropy and $S(\rho\parallel\sigma) = \tr{\rho\log\rho - \rho\log\sigma}$ is the quantum relative entropy.

These measures of block-coherence quantify the degree of the superposition of quantum states across different subspaces given by the range of each $P_i$.

\subsection{POVM-based coherence}

Bischof \emph{et al.} \cite{Bischof1} generalized the resource theory of coherence defined with respect to arbitrary positive-operator-valued measures, not limited to projective measurements.
The generalisation of the coherence requires the following theorem:
\begin{itemize}
  \item (\emph{Naimark Theorem}) Let $\bE = \{E_i\}_{i=0}^{n-1}$ be a POVM on $\cH^{S_0}$ with $n$ outcomes. Then, the POVM $\bE$ can be extended to a projective measurement $\bP = \{P_i\}_{i=0}^{n-1}$ on the Naimark space $\cH^{S}$ of dimension $d_{S}\geq d_{S_0}$ , such that
\bea
\tr{E_i\rho} = \textmd{tr}\big\{P_i(\rho\ot\out{0}{0})\big\}
\eea
holds for all states $\rho$ in $\cS_0$.
\end{itemize}
In the generalized resource theory of coherence, they proposed the following two definitions using Naimark theorem:
\begin{enumerate}
        \item (\emph{POVM-based coherence measure}) The POVM-based coherence measure $C(\rho^{S_0},\bE)$ for a state $\rho^{S_0}$ in $\cS_0$ is defined from the block coherence of the embedded state $\Phi[\rho^{S_0}] = \rho^{S_0}\ot \out{0}{0}^{S_1}$ with respect to a canonical Naimark extension $\bP$ of the POVM $\bE$ as follows,
\be\label{eq:4}
C(\rho^{S_0},\bE) := C(\Phi[\rho^{S_0}],\bP),
\ee
where $C(\rho^{S},\bP)$ is a unitarily invariant block-coherence measure on $\cS$, the set of quantum states on the Naimark space $\cH^S$.

     \item (\emph{POVM-incoherent operations}) A (maximally) POVM-incoherent operation is defined as $\Lambda_{\textmd{MPI}} = \Phi^{-1}\circ \Lambda\circ \Phi$, where $\Lambda$ is a block incoherent operation  with respect to $\bP$ that satisfies
\be\label{eq:5}
\Lambda[\cS_{\Phi}] \subseteq \cS_{\Phi}
\ee
 for the subset $\cS_{\Phi} \subseteq \cS$ of embedded system states $\Phi[\rho^{S_0}]$.
\end{enumerate}
This POVM-based coherence is well defined having the following properties \cite{Bischof1} :
P1 (\emph{Faithfulness}) and P3 (\emph{Convexity}): The nonnegativity and convexity of $C(\rho^{S_0},\bE)$ in $\cS_0$ follow directly from the underlying block-coherence measure. $C(\sigma^{S_0},\bE) = 0$ if and only if
\be\label{eq:6}
\sum_i\overline{E}_i\sigma^{S_0}\overline{E}_i = \sigma^{S_0},
\ee
where $\overline{E}_i$ denotes the projective part of $E_i$, \emph{i.e.}, the projector onto the range of $E_i$.
P2 (\emph{Monotonicity}): $C(\Lambda_{\textmd{MPI}}[\rho],\bE) \leq C(\rho,\bE)$ for all POVM-incoherent operations $\Lambda_{\textmd{MPI}}$ of the POVM $\bE$.

Then, the relative entropy of POVM-based coherence measure $C_r(\rho^{S_0},\bE)$, which is convex and independent of the choice of Naimark extension, is derived from the relative entropy of block-coherence measure in Eq.(\ref{eq:4}). It is expressible in the form:
\be
C_r(\rho^{S_0},\bE) = H[\{p_i\}] + \sum_ip_iS(\rho_i) - S(\rho^{S^0}),
\ee
where $p_i = \tr{E_i\rho^{S_0}},\ \rho_i$ is the $i$-th postmeasurement state for a given measurement operator $A_i$, \emph{i.e.}, $\rho_i = (1/p_i)A_i\rho^{S_0}A_i^{\dagger}$, and the Shannon entropy $H[\{p_i\}] = -\sum_ip_i\log p_i$.

In addition, Bischof \emph{et al.} showed in Ref. \cite{Bischof1} that POVM-based coherence measure and POVM-incoherent operations are independent of the choice of Naimark extension; especially the set of POVM-incoherent operations can be characterized by a semidefinite feasibility problem (SDP).
However, it is still uncertain to characterize POVM-incoherent operations without associating with Naimark extension, and impossible to induce POVM-incoherent operations associated with all block incoherent operations, \emph{i.e.}, POVM-incoherent operations are feasible only for block incoherent operations that satisfy Eq.(\ref{eq:5}). Since the task of interconversion is realized via POVM-incoherent operations, these uncertainties certainly limit direct resource conversion.
In this paper, we give a protocol for the conversion of  block coherence into entanglement via a bipartite block incoherent operation.

\section{Entanglement from block coherence}

In this section we consider the transformation of block coherence into entanglement like in the standard coherence theory in Ref. \cite{Streltsov4}.
The set $\cI_{BI}$ of bipartite block incoherent states $\rho_{BI}$ arises via a projective measurement $\bP = \{P_i\ot\out{j}{j}\}_{i,j}$ on the composite system of control system $S$ and target system $A$, i.e.,
\bea
\rho_{BI} = \sum_{i,j} (P_i\ot\out{j}{j})\sigma (P_i\ot\out{j}{j})= \Delta[\sigma],\quad \sigma\in \cD,
\eea
where $\cD$ is the set of all quantum states of the composite system $SA$.
Unlike the standard coherence, the superposition of quantum states in the subspace given by the range of each $P_i$ is not considered, but only overlaps between subspaces for different $P_i$ in quantum state are considered as resources. Therefore, the coherent state for the standard basis may not be the block coherent state.
Generally, quantum states that are simultaneously involved in projective parts of different operators in a measurement within a single system are considered to be the coherent states associated with the measurement.
We can view those measurement operators as filters, and coherence can be considered as the degree of the filter not filtered through overlay.
If we regard the measurement operators of higher ranks as a more sparse filter, this means that block coherence can be considered as a resource filtered through a more sparse quantum filter than the standard coherence.
The same is applicable in scenarios for generating entanglement from block coherence.
To realize this, we present an operation
\be\label{eq:7}
\Lambda'_{\textmd{MBI}} [\rho^{SA}]= U\rho^{SA}U^\dagger,
\ee
where the unitary operator $U$ is given by\\
\parbox{8.2cm}{
\begin{eqnarray*}
U &=& \sum_{i=0}^{d_P-1}\sum_{j=0}^{d_P-1}P_i\ot \out{((i+j)\ \textmd{mod}\  d_P)}{j}^A\\
&&+ \sum_{i=0}^{d_P-1}\sum_{j=d_P}^{d_A-1}P_i\ot \out{j}{j}^A,
\end{eqnarray*}}\hfill
\parbox{.3cm}{\begin{eqnarray}\label{eq:incoherent unitary}\end{eqnarray}} where $d_P$ is the number of $\bP = \{P_i\}$.

If $\rho^{SA}$ is a block incoherent state, then
\bea
\big[\small{P_i\ot\out{((i+k)\ \textmd{mod}\ d_P)}{k}}\big]\rho^{SA}\big[\small{P_j\ot\out{l}{((j+l)\ \textmd{mod}\ d_P)}}\big]\small{= 0}
\eea
for $i\neq j$ or $k\neq l$. This means the operation $\Lambda'_{\textmd{MBI}}$ is a block incoherent operation.
Let us consider the operation $\Lambda'_{\textmd{MBI}}$ on the states of the control system $S$.
Suppose the initial state $\rho^S$ of the control system $S$ is block coherent with respect to $\{P_i\}$. Then the operation $\Lambda'_{\textmd{MBI}}$ acting on the state $\rho^S\ot\out{0}{0}^A$ of the system $SA$ intertwine the part, caught simultaneously through different measurement operators $P_i$, with the target system $A$. This action, thus, induces entanglement between the control and the target systems.
Also, if all $P_i$ are rank one, it is an incoherent operation that creates entanglement from the standard coherence (cf. Eq. (9) in Ref. \cite{Streltsov4}). We arrive at the following theorem with respect to converting block coherence into entanglement:

\begin{thrm}\label{thm 2}
A state $\rho^S$ can be converted to an entangled state via block incoherent operations if and only if $\rho^S$ is block coherent.
\end{thrm}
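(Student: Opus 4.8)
The plan is to establish the two implications separately, using for the ``if'' direction the explicit bipartite operation $\Lambda'_{\textmd{MBI}}$ of Eq.~\eqref{eq:7} and for the ``only if'' direction the fact that bipartite block incoherent states are separable. The step common to both is to record the action of the unitary $U$ (given explicitly above, with ancilla dimension $d_A\geq d_P$) on an embedded input: on $\rho^S\otimes\proj{0}^A$ only the $j=0$ terms of $U$ contribute, so $U(\ket{\psi}\otimes\ket{0}^A)=\sum_i (P_i\ket{\psi})\otimes\ket{i}^A$ for every $\ket{\psi}\in\cH^S$, whence
\[
\sigma^{SA}:=\Lambda'_{\textmd{MBI}}[\rho^S\otimes\proj{0}^A]=\sum_{i,j}(P_i\rho^S P_j)\otimes\out{i}{j}^A .
\]

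For the ``if'' direction, assume $\rho^S$ is block coherent, so that $P_{i_0}\rho^S P_{j_0}\neq 0$ for some $i_0\neq j_0$; choose unit vectors $\ket{b}\in\mathrm{ran}(P_{i_0})$ and $\ket{a}\in\mathrm{ran}(P_{j_0})$ with $c:=\bra{b}\rho^S\ket{a}\neq 0$. I would then apply to $\sigma^{SA}$ the local filter $R\otimes T$ with $R=\proj{a}+\proj{b}$ on $S$ and $T=\proj{i_0}+\proj{j_0}$ on $A$; using $R P_{i_0}=\proj{b}=P_{i_0}R$ and $P_{j_0}R=\proj{a}=RP_{j_0}$, the output is supported on a two-qubit subspace and, identifying $\ket{b},\ket{a}$ and $\ket{i_0},\ket{j_0}$ with computational bases, equals $\alpha\proj{00}+\beta\proj{11}+c\out{00}{11}+\bar c\out{11}{00}$ with $\alpha=\bra{b}\rho^S\ket{b}>0$ (positivity of $\rho^S$ together with $c\neq0$ forces $\alpha>0$) and $\beta=\bra{a}\rho^S\ket{a}\geq0$. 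A one-line partial transpose shows this operator has eigenvalue $-|c|<0$; being nonzero and NPT it is entangled, and since a local filter cannot create entanglement from a separable state, $\sigma^{SA}$ itself is entangled. As $\Lambda'_{\textmd{MBI}}$ is a bipartite block incoherent operation, this yields the claimed conversion. (For pure $\rho^S=\proj{\psi}$ the filter is unnecessary: $\sigma^{SA}=\proj{\Psi}$ with $\ket{\Psi}=\sum_i (P_i\ket{\psi})\otimes\ket{i}$, whose Schmidt rank is the number of nonvanishing $P_i\ket{\psi}$, which is at least $2$ exactly when $\ket{\psi}$ is block coherent.)

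For the ``only if'' direction I would argue the contrapositive. If $\rho^S$ is block incoherent, $\rho^S=\sum_i P_i\rho^S P_i$, then $\rho^S\otimes\proj{0}^A=\sum_i (P_i\rho^S P_i)\otimes\proj{0}$ is block incoherent for the bipartite measurement $\{P_i\otimes\proj{j}\}$, since $\Delta[\rho^S\otimes\proj{0}^A]=\Delta_S[\rho^S]\otimes\proj{0}=\rho^S\otimes\proj{0}^A$. Hence any bipartite block incoherent operation $\Lambda_{\textmd{MBI}}$ returns a block incoherent state $\tau^{SA}=\sum_{i,j}(P_i\otimes\proj{j})\mu(P_i\otimes\proj{j})$; each diagonal block is supported on $\mathrm{ran}(P_i)\otimes\mathrm{span}\{\ket{j}\}$ and hence factorizes as $\omega_{ij}\otimes\proj{j}$ with $\omega_{ij}\geq0$, giving $\tau^{SA}=\sum_j\big(\sum_i\omega_{ij}\big)\otimes\proj{j}$, a quantum--classical state and therefore separable. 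Thus a block incoherent $\rho^S$ can never be converted to an entangled state by block incoherent operations, which together with the previous paragraph proves the equivalence.

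The step I expect to be the main obstacle is the mixed-state case of the ``if'' direction: since the separable states form a convex set, one cannot simply reduce it to the transparent pure-state argument, so one genuinely needs an entanglement witness for $\sigma^{SA}$, and the local-filtering-plus-partial-transpose computation above is what supplies one uniformly. Everything else---the action of $U$, the quantum--classical structure of bipartite block incoherent states, and the closure/monotonicity properties of $\cO_{\textmd{MBI}}$ recorded in Sec.~II---is routine.
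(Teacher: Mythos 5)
Your proof is correct, but in both directions it takes a genuinely different route from the paper. For the ``only if'' part the paper argues quantitatively: contractivity of the distance gives $E_d^{S:A}\big(\Lambda_{\textmd{MBI}}[\rho^S\ot\out{0}{0}^A]\big)\leq C_d(\rho^S,\{P_i\})$ for every contractive $d$, using $\cI_{\textmd{BI}}\subset\cS_{\textmd{SEP}}$; you instead prove qualitatively that any bipartite block incoherent output is quantum--classical, $\sum_j\big(\sum_iP_i\mu_{jj}P_i\big)\ot\out{j}{j}$, hence separable --- in effect you supply an explicit proof of the inclusion $\cI_{\textmd{BI}}\subset\cS_{\textmd{SEP}}$ that the paper invokes without argument, at the cost of losing the monotone-type bound $E_d\leq C_d$. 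For the ``if'' part you use the same operation $\Lambda'_{\textmd{MBI}}$ and the same output $\sum_{i,j}P_i\rho^SP_j\ot\out{i}{j}^A$, but where the paper certifies entanglement entropically, via $E_r^{S:A}(\sigma^{SA})\geq S(\sigma^S)-S(\sigma^{SA})$ together with $\ptr{A}{\cdot}=\Delta[\rho^S]$ and $S(\rho^S)=S(\sigma^{SA})$, yielding $E_r\geq C_r(\rho^S,\{P_i\})$, you certify it by local filtering onto a two-qubit block and the PPT criterion (the filtered operator has partial-transpose eigenvalue $-|c|$), which is correct since local filters preserve separability and NPT implies entanglement. Your witness argument is more elementary (no Plenio--Virmani--Papadopoulos bound needed) and handles mixed states uniformly, and your pure-state Schmidt-rank remark is a nice bonus; what the paper's entropic route buys, however, is the quantitative statement $E_r^{S:A}\big(\Lambda'_{\textmd{MBI}}[\rho^S\ot\out{0}{0}^A]\big)\geq C_r(\rho^S,\{P_i\})$, which (with contractivity under the unitary) gives the equality of Eq.~(\ref{eq:12}) that the later POVM examples rely on, whereas your argument only certifies that the generated entanglement is nonzero.
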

\begin{proof}
The proof of this theorem is similar to that of Theorem 2 in Ref. \cite{Streltsov4}. We first prove that for any contractive distance $d$, the amount of entanglement $E_d$ obtainable from a state $\rho^S$ via a block incoherent operation $\Lambda_{\textmd{MBI}}$ is bounded above by the block-coherence $C_d(\rho^S)$, i.e.,
\bea
E_d^{S:A}\big(\Lambda_{\textmd{MBI}}[\rho^S\ot\out{0}{0}^A]\big) \leq C_d(\rho^S,\{P_i\}),
\eea
where $E_d^{S:A}$ is the distance-based measure of entanglement defined as
\bea E_d^{S:A}(\rho^{SA}) = \min_{\sigma\in \cS_{\textmd{SEP}}}d(\rho^{SA},\sigma)
\eea
with $\cS_{\textmd{SEP}}$ being the set of all separable states.
This means that if $\rho^S$ is block incoherent then it cannot be converted into entanglement via block incoherent operations.
We consider that the distance $d$ is contractive under quantum operations,
\bea
d(\Lambda[\rho], \Lambda[\sigma]) \leq d(\rho, \sigma)
\eea
for any CPTP map $\Lambda$.
If $\sigma^S$ is the closest block incoherent state to $\rho^S$, i.e., $C_d(\rho^S,\{P_i\}) = d(\rho^S, \sigma^S)$, then we have
\bea
C_d(\rho^S,\{P_i\}) &=& d\big(\rho^S\ot\out{0}{0}^A, \sigma^S\ot\out{0}{0}^A\big)\\
&\geq& d\big(\Lambda_{\textmd{MBI}}[\rho^S\ot\out{0}{0}^A], \Lambda_{\textmd{MBI}}[\sigma^S\ot\out{0}{0}^A]\big)\\
&\geq& E_d^{S:A}\big(\Lambda_{\textmd{MBI}}[\rho^S\ot\out{0}{0}^A]\big),
\eea
where equality and the first inequality follow from the contractivity of the distance $d$, and the second inequality is due to the definition of $E_d^{S:A}$ and $\cI_{\textmd{BI}} \subset \cS_{\textmd{SEP}}$.

Conversely, we consider that the block incoherent operation $\Lambda'_{\textmd{MBI}}$ in Eq.(\ref{eq:7}) maps the initial state $\rho^S\ot\out{0}{0}^A$ to the state
\be\label{eq:incoherent operation}
\Lambda'_{\textmd{MBI}}[\rho^S\ot\out{0}{0}^A] = \sum_{i,j}P_i\rho^S P_j\ot\out{i}{j}^A.
\ee
Note that the matrix elements of $\rho^S$ are embedded in the matrix of $\Lambda'_{\textmd{MBI}}[\rho^S\ot\out{0}{0}^A]$ as follows (the other matrix elements of $\Lambda'_{\textmd{MBI}}[\rho^S\ot\out{0}{0}^A]$ are all zero):
\bea
\rho_{k^{(i)}l^{(j)}} &=&\bra{k^{(i)}}\rho^S\ket{l^{(j)}} = \bra{k^{(i)}}P_i\rho^S P_j\ket{l^{(j)}}\\
&=& \bra{k^{(i)}i}\Lambda'_{\textmd{MBI}}[\rho^S\ot\out{0}{0}^A]\ket{l^{(j)}j},
\eea
where $\{\ket{k^{(i)}}\}_k$ is a basis of subspace given by the range of $P_i$ and $\ket{mn} = \ket{m}\ot\ket{n}$. It implies that
\bea
S(\rho^S) = S\big(\Lambda'_{\textmd{MBI}}[\rho^S\ot\out{0}{0}^A]\big).
\eea
Then, for the relative entropy of entanglement $E_r^{S:A}$ defined as
\bea
E_r^{S:A}(\rho^{SA}) = \min_{\sigma\in\cS_{\textmd{SEP}}}S(\rho^{SA} \parallel \sigma),
\eea
we have\\
\parbox{8.2cm}{
\begin{eqnarray*}
E_r^{S:A}(\Lambda'_{\textmd{MBI}}[\rho^S\ot\out{0}{0}^A]) &\geq& S(\Delta[\rho^S])-S(\rho^S)\\
&=& C_r(\rho^S,\{P_i\}),
\end{eqnarray*}}\hfill
\parbox{.3cm}{\begin{eqnarray}\label{eq:relative entropy of entanglement}\end{eqnarray}}\\
where inequality is due to $E_r^{S:A}(\sigma^{SA})\geq S(\sigma^{S}) - S(\sigma^{SA})$ \cite{Plenio} and $\ptr{A}{\Lambda'_{\textmd{MBI}}[\rho^S\ot\out{0}{0}^A]} = \Delta[\rho^S]$ , and equality follows from Eq.(\ref{eq:3}).
Therefore, if $C_r(\rho^S, \{P_i\})>0$, \emph{i.e.}, $\rho^S$ is block coherent, the state $\rho^S$ can be converted to an entangled state via a block incoherent operation.

\end{proof}

In particular, there is always a block incoherent operation $\Lambda'_{\textmd{MBI}}$ that satisfies
\be\label{eq:12}
E_r^{S:A}\big(\Lambda'_{\textmd{MBI}}[\rho^S\ot\out{0}{0}^A]\big) = C_r^B(\rho^S,\{P_i\}),
\ee
using the contractivity of the relative entropy for the unitary operation \cite{Watrous} and Eq. (\ref{eq:relative entropy of entanglement}).
This means that there must be a block incoherent operation that converts the same degree of entanglement as the block coherence in the initial state of $S$ and that we can control the density of filters that measure coherence by choosing each $P_i$ from different ranks, and it is possible to control the degree of entanglement converted using this density.

\section{Entanglement from POVM-based coherence}

In this section we consider the transformation of POVM-based coherence into entanglement. 
We remind the readers that there are some potential constraints here.
The first is that the existence of POVM-incoherent operations cannot be guaranteed without considering Naimark extension and that, even if considered, not all block incoherent operations induce the associated POVM-incoherent operations. Especially, the block incoherent operation of Eq.(\ref{eq:incoherent operation}) that causes all coherence to be converted into entanglement does not satisfy the condition in Eq.(\ref{eq:5}), so the POVM-incoherent operation associated with Eq.(\ref{eq:incoherent operation}) is not realized. And the second is that entanglement of $S:A$ produced by block coherence considered in the Naimark extension does not guarantee the generation of entanglement of $S_0:A$ in the original system; all entanglement can exist only as the entanglement of $S_1:A$ (cf. Example \ref{ex:3}).
This means the generation of entanglement through POVM-incoherent operators from POVM-based coherence in the initial state cannot be considered in all situations and that entanglement of the original system may not be extracted from the generated entanglement on the Naimark system.

Following, we propose some strategies to generate entanglement from POVM-based coherence according to the methods of Naimark extension. The Naimark extension can be realized in various ways namely using the embedding channel and attaching an auxiliary system to the original system. We present some examples to illustrate the same. Example \ref{ex:3} shows that all entanglement in a quantum state can exist only between $S_1$ and $A$. In Example \ref{ex:3a} we illustrate the generation of maximal entanglement using embedding channel, and  we consider harnessing entanglement through Naimark extension which requires auxiliary systems in Example \ref{ex:4}.\\

\begin{exam}\label{ex:3}
Let $\bE = \{E_i\}_{i=0}^1$ be a POVM of a qubit system $S_0$, where
\bea
E_0 &=& a\out{0}{0}+(1-a)\out{1}{1},\\
E_1 &=& (1-a)\out{0}{0} + a\out{1}{1}\quad (0< a<\frac{1}{2}).
\eea
We also construct a Naimark extension $\bP = \{P_i\}_{i=0}^1$ of $\bE$ as
\bea
P_i = \sum_{\scriptsize\begin{array}{c}
              a=0,1 \\
              b=0,1
            \end{array}
}A^\dagger_{i,a}A_{i,b}\ot \out{a}{b}
\eea
where
\bea
A_{i,0} &=& A_{i}= \sqrt{E_i} \ (i=0,1), \\
A_{1,1} &=& -\sqrt{1-a}\out{0}{0} + \sqrt{a}\out{1}{1}, \\
A_{2,1} &=& \sqrt{a}\out{0}{0} - \sqrt{1-a}\out{1}{1}.
\eea
Then the block incoherent unitary operation of (\ref{eq:incoherent operation}) maps the state $\rho^{S_0}\ot \out{1}{1}^{S_1}\ot\out{1}{1}^A$ to the state
\bea
&&\Lambda'_{\textmd{MBI}}[\rho^{S_0}\ot \out{0}{0}^{S_1}\ot\out{0}{0}^A]\\
&& = \sum_{i.j}\sum_{a,b}A^\dagger_{i,a}A_{i}\rho^{S_0}A^\dagger_{j}A_{j,b}\ot\out{a}{b}^{S_1}\ot\out{i}{j}^{A}.
\eea

In addition, $\out{0}{0}$ and $\out{1}{1}$ as the eigen states of $E_0$ and $E_1$ on $S_0$ are POVM-based coherent states for $\bE$, and by the above operation the coherence is converted to nonzero-entanglement between $S$ and $A$.
But if the initial state $\rho^{S_0}$ is $\out{k}{k}~(k=0,1)$, the quantum state after the unitary transformation is
\bea
&&\Lambda'_{\textmd{MBI}}[\rho^{S_0}\ot \out{0}{0}^{S_1}\ot\out{0}{0}^A]\\
&& = \out{k}{k} \ot \sum_{i.j}\sum_{a,b}\beta^{(k)}_{i,a}\beta^{(k)}_{j,b}\out{a}{b}^{S_1}\ot\alpha^{(k)}_{i,j}\out{i}{j}^{A},
\eea
where $\alpha^{(k)}_{i,j} = \bra{k}A_{i}\out{k}{k}A^\dagger_{j}\ket{k}, \ \beta^{(k)}_{i,a} = \bra{k}A_{i,a}\ket{k}$\ $(k=0,1)$.
That is, all entanglement in this quantum state exists only between $S_1$ and $A$.
\end{exam}

Thus, it is possible to create entanglement in the Naimark system from POVM-based coherence through the extension. Note that embedding the original system into the expanded system is required to realize the Naimark extension. This is usually realized through the embedding channel or by attaching the auxiliary state to the initial state.

\subsection{Maximal entanglement using embedding channel}

We first introduce Naimark extension using the embedding channel and the strategy of generating entanglement through it. Through the strategy we can generate the maximal entanglement in the Naimark space, even if the initial state was prepared in a space smaller than the Naimark space. Here we illustrate it by giving an example.

\begin{exam}\label{ex:3a}
\textbf{(Three elements (qubit trine) POVM)}
Let $\bE^{(3)} = \{\frac{2}{3}\out{\phi_k}{\phi_k}\}_{k=0}^2$ be a POVM on $\cH^{S_0} = \cC^2$ with $\ket{\phi_k} = 1/\sqrt{2}(\ket{0}+\omega^{k}\ket{1})$, where $\omega = \exp(2\pi i/3)$, and $\bP^{(3)} = \{\out{\varphi_k}{\varphi_k}\}_{k=0}^2$ be a Naimark extension of it on $\cH^{S} = \cC^3$, where
\bea
\ket{\varphi_0}&=&\frac{1}{\sqrt{3}}\big(\ket{0}+\ket{1}+\ket{2}\big)\\
\ket{\varphi_1}&=&\frac{1}{\sqrt{3}}\big(\ket{0}+\omega\ket{1}+\omega^2\ket{2}\big)\\
\ket{\varphi_2}&=&\frac{1}{\sqrt{3}}\big(\ket{0}+\omega^2\ket{1}+\omega\ket{2}\big).
\eea
From the expression of $\bP^{(3)}$, we can see that positive measurement is a rank-one projective measurement, i.e., von Neumann measurement.
Therefore, we can generate the entanglement on $S:A$ through the incoherent operation $\Lambda_{SA}[\rho^{S}\ot\out{0}{0}^A] = U(\rho^{S}\ot\out{0}{0}^A)U^\dagger$ for all coherent states $\rho^{S}$ on the system $S \ (d_A\geq3)$ where
\bea
U &=& \sum_{i=0}^{2}\sum_{j=0}^{2}\out{\varphi_i}{\varphi_i}\ot \out{((i+j)\ \textmd{mod}\ 3)}{j}^A\\
&&+ \sum_{i=0}^{2}\sum_{j=3}^{d_A-1}\out{\varphi_i}{\varphi_i}\ot \out{j}{j}^A.
\eea

In addition, all states $\rho^{S_0}$ on the qubit system have nonzero POVM-based coherence because there is no quantum state satisfying Eq. (\ref{eq:6}). Therefore, we have the following result for relative entropy of entanglement from Eqs. (\ref{eq:4}) and (\ref{eq:12}) :
\bea
E_r^{S:A}\big(\Lambda_{SA}[(\rho^{S_0}\oplus\ 0) \ot\out{0}{0}^A]\big) = C_r(\rho^{S_0},\bE^{(3)}) > 0
\eea
with $\rho^{S_0}\oplus 0 = \left(
                          \begin{array}{ccc}
                            \rho^{00} & \rho^{01} & 0 \\
                            \rho^{10} & \rho^{11} & 0 \\
                            0 & 0 & 0 \\
                          \end{array}
                        \right)
$ being a general embedded system state on $\cH^{S} = \cC^3$ where $\rho^{ij} = \bra{i}\rho^{S_0}\ket{j}$. This means that all quantum states on $\cC^{2}$ can be utilized as potential resources on $\cC^3$.
In particular, when the state $\rho^{S_0}$ is $\out{0}{0}$ or $\out{1}{1}$ with maximal POVM-based coherence of $\log3$, we have
\bea
\Lambda_{SA}\big[\out{0}{0}^{S} \ot\out{0}{0}^A\big] &=& \sum_{i,j=0}^2\frac{1}{3}\out{\varphi_i}{\varphi_j}\ot \out{i}{j}^A, \\
\Lambda_{SA}\big[\out{1}{1}^{S} \ot\out{0}{0}^A\big] &=& \sum_{i,j=0}^2\frac{1}{3}\out{\varphi'_i}{\varphi'_j}\ot \out{i}{j}^A,
\eea
where $\out{i}{i}^{S}$ be a general embedded system state on $\cC^3$ of $\out{i}{i} \ (i=0,1)$, and
\bea
\ket{\varphi'_0}&=&\ket{\varphi_0}\\
\ket{\varphi'_1}&=&\frac{1}{\sqrt{3}}\big(\omega^2\ket{0}+\ket{1}+\omega\ket{2}\big)\\
\ket{\varphi'_2}&=&\frac{1}{\sqrt{3}}\big(\omega\ket{0}+\ket{1}+\omega^2\ket{2}\big).
\eea
This gives $E_r^{S:A}\big(\Lambda_{SA}[(\rho^{S_0}\oplus\ 0) \ot\out{0}{0}^A]\big) = \log3$. Thus, after embedding the quantum state on the qubit system, the maximum entanglement on the larger system $\cC^3 \ot \cH^A$ can be obtained through an incoherent operation.
\end{exam}

Furthermore, we can generalize the above result to a larger Naimark extension system provided the dimension of the system is prime.
When $d_{S}(\geq3)$ is a prime number, we consider the POVM $\bE = \{\frac{2}{d}\out{\phi_k}{\phi_k}\}_{k=0}^{d_{S}-1}$ on $\cH^{S_0} = \cC^2$ with $\ket{\phi_k} = 1/\sqrt{2}(\ket{0}+\nu^{k}\ket{1})$, where $\nu = \exp(2\pi i/d_{S})$, and a Naimark extension $\bP$ of $\bE$ on $\cH^{S} = \cC^{d_{S}}$ is given by
$\bP = \{\out{\varphi_k}{\varphi_k}\}_{k=0}^{d_A-1}$, where
\bea
\ket{\varphi_k}=\frac{1}{\sqrt{d_{S}}}\Big(\sum_{i=0}^{d_{S}-1}\nu^{(ik\ \textmd{mod}\ d_{S})}\ket{i}\Big)
\eea
for $k = 0,1,\cdots,d_{S}-1.$

Then, for all quantum states $\rho^{S_0} = \left(
                          \begin{array}{cc}
                            \rho^{00} & \rho^{01}\\
                            \rho^{10} & \rho^{11}\\
                          \end{array}
                        \right)$ on the system $S_0$, there is a bipartite incoherent operation $\Lambda_{SA}$ on $\cH^{SA}$ that can generate nonzero entanglement from the general embedded system state
\bea
\rho^{S_0}+0 = \left(
                          \begin{array}{ccccc}
                            \rho^{00} & \rho^{01} &  0 & \cdots & 0\\
                            \rho^{10} & \rho^{11} &  0 & &\\
                            0 & 0 &  0 &  & \vdots \\
                            \vdots &  &  & \ddots &  \\
                            0 &  & \cdots &  & 0 \\
                          \end{array}
                        \right)
                        \eea
on $\cH^{S}$, that is $\Lambda_{SA}[(\rho^{S_0}+0)\ot\out{0}{0}^A] = U[(\rho^{S_0}+0)\ot\out{0}{0}^A]U^\dagger$ $(d_A \geq d_{S})$ where
\bea
U &=& \sum_{i=0}^{d_{S}-1}\sum_{j=0}^{d_{S}-1}\out{\varphi_i}{\varphi_i}\ot \out{((i+j)\ \textmd{mod}\ d_{S})}{j}^A\\
&&+ \sum_{i=0}^{d_{S}-1}\sum_{j=d_{S}}^{d_A-1}\out{\varphi_i}{\varphi_i}\ot \out{j}{j}^A.
\eea

\begin{figure}[t]
\centering
\includegraphics[height=.33\textheight]{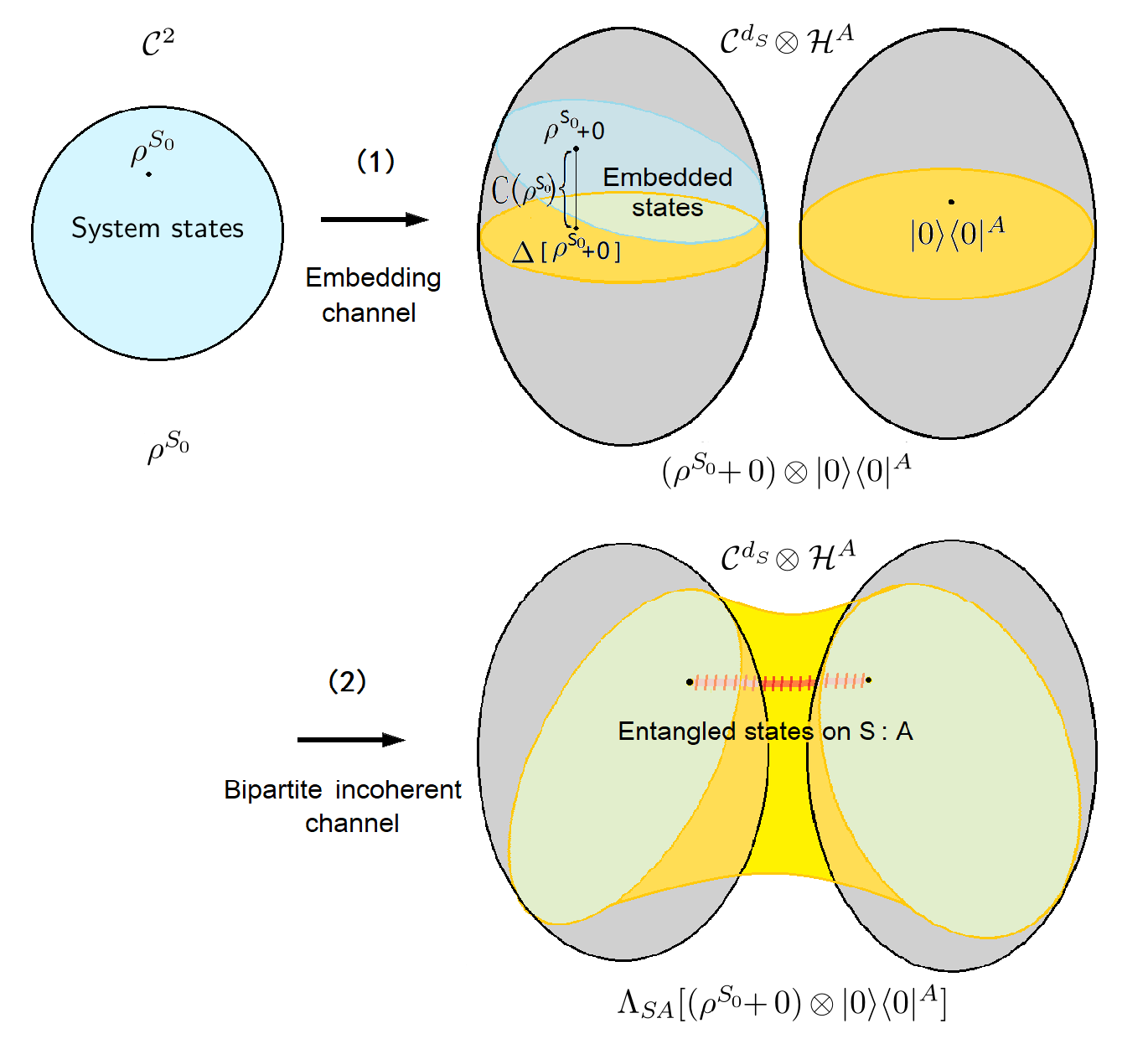}
\caption{\label{fig:1} (1) The qubit states $\rho^{S_0}$ with POVM-based coherence as a potential resource are embedded by the embedding channel into quantum states with coherence associated with a Naimark extension $\bP$ of the POVM $\bE$, as a practical resource.  \ (2) POVM-based coherence in the embedded states is converted to entanglement on $S:A$ by a bipartite incoherent channel for $\bP$. If the dimension $d_{S}$ of the
Naimark space is prime, the maximum entanglement $\log d_{S}$ can be generated from the maximal POVM-based coherent states $\out{0}{0}$ and $\out{1}{1}$.}
\end{figure}

This framework, which translates more entanglement from the qubit system $\cC^2$, requires two processes: the embedding channel (Fig. \ref{fig:1}-(1)) and the bipartite incoherent channel (Fig. \ref{fig:1}-(2)).
Let us consider these embedding and bipartite incoherent channels for $\out{0}{0}$ and $\out{1}{1}$  with maximal POVM-based coherence of $\log d_{S}$ on $\cH^{S_0}$.  First, the state $\out{0}{0}$ (or $\out{1}{1}$) is embedded to $\out{0}{0}^{S}\ot\out{0}{0}^A$ (or $\out{1}{1}^{S}\ot\out{0}{0}^A$) via the embedding channel, where $\out{i}{i}^{S}$ be a general embedded system state on $\cC^{d_{S}}$ of $\out{i}{i} \ (i=0,1)$. Next, we convert the embedded state into the following state with maximal entanglement of $\log d_{S}$ by the bipartite incoherent channel :
\bea
\Lambda_{SA}\big[\out{0}{0}^{S} \ot\out{0}{0}^A\big] &=& \sum_{i,j=0}^{d_{S}-1}\frac{1}{d_{S}}\out{\varphi_i}{\varphi_j}\ot \out{i}{j}^A, \\
\Lambda_{SA}\big[\out{1}{1}^{S} \ot\out{0}{0}^A\big] &=& \sum_{i,j=0}^{d_{S}-1}\frac{1}{d_{S}}\out{\varphi'_i}{\varphi'_j}\ot \out{i}{j}^A,
\eea
where
$$\ket{\varphi'_k}=\frac{1}{\sqrt{d_{S}}}\big(\nu^{((d_{S}-1)k\ \textmd{mod}\ d_{S})}\ket{0} + \sum_{i=0}^{d_{S}-2}\nu^{(ik\ \textmd{mod}\ d_{S})}\ket{i}\big).$$
This allows us to create the maximal entanglement of $\log d_{S}$ on $S:A$ via the incoherent operation $\Lambda_{SA}$ when the quantum state $\rho^{S_0}$ is $\out{0}{0}$ or $\out{1}{1}$.

From the perspective of the system to handle the quantum states of each $\cC^2$ and $\cC^{d_{S}}$, obviously the quantum states on the qubit system $\cC^2$ are more advantageous for transmission and storage than those on the system with larger dimensions. We can also generate entanglement, even the maximal entanglement, on the target system from initial states on $\cC^2$ by preparing corresponding embedding and bipartite incoherent channels.

\subsection{Selective entanglement using auxiliary systems}

We consider harnessing entanglement through Naimark extension which requires auxiliary systems. The following is an example of creating entanglement from POVM (with four elements)-based coherence in a qubit system. We notice an advantage in this example: by applying bipartite block-incoherent operation only once, we can selectively implement the transformation of coherence into entanglement based on two or more POVMs.

\begin{exam}\label{ex:4}
\textbf{(Four elements POVM)}
Let $\bE^{(4)} = \{\frac{1}{2}\out{\phi_k}{\phi_k}\}_{k=0}^3$ be a POVM on $\cH^{S_0} = \cC^2$ with $\ket{\phi_k} = 1/\sqrt{2}(\ket{0}+\omega^{k}\ket{1})$, where $\omega = \exp(\pi i/2)$, and $\bP^{(4)} = \{\out{\varphi_k}{\varphi_k}\}_{k=0}^3$ be a Naimark extension of it on $\cH^{S}(=\cH^{S_0}\ot\cH^{S_1})=\cC^4$ (here $\cC^4$ can be considered the same system as $\cC^2\ot\cC^2$ by identifying $\ket{0}^{\cC^4} = \ket{00}^{\cC^2\ot\cC^2}, \ket{1}^{\cC^4} = \ket{01}^{\cC^2\ot\cC^2}, \ket{2}^{\cC^4} = \ket{10}^{\cC^2\ot\cC^2}$ and $\ket{3}^{\cC^4} = \ket{11}^{\cC^2\ot\cC^2}$), where
\bea
\ket{\varphi_0}&=&\frac{1}{2}\big(\ket{0}+\ket{1}+\sqrt{2}\ket{2}\big)\\
\ket{\varphi_1}&=&\frac{1}{2}\big(\ket{0}+i\ket{1}-\exp(\frac{\pi i}{4})\ket{2}+\exp(\frac{\pi i}{4})\ket{3}\big)\\
\ket{\varphi_2}&=&\frac{1}{2}\big(\ket{0}-\ket{1}-\sqrt{2}i\ket{3}\big)\\
\ket{\varphi_3}&=&\frac{1}{2}\big(\ket{0}-i\ket{1}-\exp(-\frac{\pi i}{4})\ket{2}+\exp(\frac{3\pi i}{4})\ket{3}\big).
\eea
Then, we can generate the entanglement on $S:A$ via the bipartite incoherent operation $\Lambda_{SA}[\rho^{S}\ot\out{0}{0}^A] = U(\rho^{S}\ot\out{0}{0}^A)U^\dagger$ for all coherent states $\rho^{S}$ on the system $S$ $(d_A\geq4)$ where
\bea
U &=& \sum_{i=0}^{3}\sum_{j=0}^{3}\out{\varphi_i}{\varphi_i}\ot \out{((i+j)\ \textmd{mod}\ 4)}{j}^A\\
&&+ \sum_{i=0}^{3}\sum_{j=4}^{d_A-1}\out{\varphi_i}{\varphi_i}\ot \out{j}{j}^A.
\eea
Also, we have the following result from POVM-based coherence theory for relative entropy of entanglement:
\bea
E_r^{S:A}\big(\Lambda_{SA}[(\rho^{S_0}\ot\out{0}{0}) \ot\out{0}{0}^A]\big) = C_r(\rho^{S_0}.\bE^{(4)}) > 0,
\eea
This is because for all quantum states on $S_0$, there are necessarily $\phi_i,\phi_j$ that satisfy the following:
\bea
\bra{\phi_i}\rho^{S_0}\ket{\phi_j} \neq 0 \quad \forall i\neq j.
\eea
It means that all quantum states on $\cC^{2}$ can become practical resources on the Naimark space $\cC^4$ by attaching auxiliary systems.
Particularly, when the state $\rho^{S_0}$ is $\out{0}{0}$ or $\out{1}{1}$ with maximal POVM-based coherence of $\log4$, we have
\bea
\Lambda_{SA}\big[\out{0}{0}\ot\out{0}{0} \ot\out{0}{0}^A\big] &=& \sum_{i,j=0}^3\frac{1}{4}\out{\varphi_i}{\varphi_j}\ot \out{i}{j}^A, \\
\Lambda_{SA}\big[\out{1}{1}\ot\out{0}{0} \ot\out{0}{0}^A\big] &=& \sum_{i,j=0}^3\frac{1}{4}\out{\varphi'_i}{\varphi'_j}\ot \out{i}{j}^A,
\eea
where
\bea
\ket{\varphi'_0}&=&\ket{\varphi_0}\\
\ket{\varphi'_1}&=&\frac{1}{2}\big(-i\ket{0}+\ket{1}-\exp(-\frac{\pi i}{4})\ket{2}+\exp(-\frac{\pi i}{4})\ket{3}\big)\\
\ket{\varphi'_2}&=&\frac{1}{2}\big(-\ket{0}+\ket{1}+\sqrt{2}i\ket{3}\big)\\
\ket{\varphi'_3}&=&\frac{1}{2}\big(i\ket{0}+\ket{1}-\exp(-\frac{\pi i}{4})\ket{2}-\exp(\frac{\pi i}{4})\ket{3}\big).
\eea
This gives $E_r^{S:A}\big(\Lambda_{SA}[(\rho^S\ot\out{0}{0}) \ot\out{0}{0}^A]\big) = \log4 = 2$ as maximal entanglement on $S:A$.
\end{exam}

In Example \ref{ex:4}, the first auxiliary system is the part added to the Naimark extension and the second system is the counterpart that shares the generated entanglement. We note an interesting thing in this example that POVM $\bE^{(4)} \equiv \bE_0$ is related only to $\out{0}{0}$ of the first auxiliary system $\cC^2$, and $\out{1}{1}$ is associated with another POVM $\bE_1$ selected under the condition that $\bP^{(4)}$ is a projection.
The elements of $\bP^{(4)}$ on $S = \cC^2 \ot \cC^2$ are expressed in the following $4\times4$ matrix forms:
\bea
\out{\varphi_0}{\varphi_0} &=& \frac{1}{4}\left(
                          \begin{array}{cc|cc}
                            1 & 1 & \sqrt{2} & 0 \\
                            1 & 1 & \sqrt{2} & 0 \\\hline
                            \sqrt{2} & \sqrt{2} & \bf{2} & \bf{0} \\
                            0 & 0 & \bf{0} & \bf{0} \\
                          \end{array}
                        \right)\\
\out{\varphi_1}{\varphi_1} &=& \frac{1}{4}\left(
                          \begin{array}{cc|cc}
                            1 & -i & -e^{-\frac{\pi}{4}i} & e^{-\frac{\pi}{4}i} \\
                            i & 1 & -e^{\frac{\pi}{4}i} & e^{\frac{\pi}{4}i} \\\hline
                            -e^{\frac{\pi}{4}i} & -e^{-\frac{\pi}{4}i} & \bf{1} & \bf{-1} \\
                            e^{\frac{\pi}{4}i} & e^{-\frac{\pi}{4}i} & \bf{-1} & \bf{1} \\
                          \end{array}
                        \right)\\
\out{\varphi_2}{\varphi_2} &=& \frac{1}{4}\left(
                          \begin{array}{cc|cc}
                            1 & -1 & 0 & i\sqrt{2} \\
                            -1 & 1 & 0 & -i\sqrt{2} \\\hline
                            0 & 0 & \bf{0} & \bf{0} \\
                            -i\sqrt{2} & i\sqrt{2} & \bf{0} & \bf{2} \\
                          \end{array}
                          \right)\\
\out{\varphi_3}{\varphi_3} &=& \frac{1}{4}\left(
                          \begin{array}{cc|cc}
                            1 & i & -e^{\frac{\pi}{4}i} & -e^{\frac{\pi}{4}i} \\
                            -i & 1 & -e^{-\frac{\pi}{4}i} & e^{\frac{3\pi}{4}i} \\\hline
                            -e^{-\frac{\pi}{4}i} & -e^{\frac{\pi}{4}i} & \bf{1} & \bf{1} \\
                            e^{\frac{3\pi}{4}i} & -e^{\frac{\pi}{4}i} & \bf{1} & \bf{1} \\
                          \end{array}
                        \right).
\eea
The elements of $\bE_1$ on the original system $\cC^2$ are determined by $2\times2$ matrices in the lower right (the blocks in bold font) as $\bE_1 = \{\frac{1}{2}\out{\phi^{(1)}_j}{\phi^{(1)}_j}\}_{j=0}^3$ with
\bea
\ket{\phi^{(1)}_0}&=&\ket{0}\\
\ket{\phi^{(1)}_1}&=&\frac{1}{\sqrt{2}}\big(-\exp(\frac{\pi i}{4})\ket{0}+\exp(\frac{\pi i}{4})\ket{1}\big)\\
\ket{\phi^{(1)}_2}&=&-i\ket{1}\\
\ket{\phi^{(1)}_3}&=&\frac{1}{\sqrt{2}}\big(-\exp(-\frac{\pi i}{4})\ket{0}+\exp(\frac{3\pi i}{4})\ket{1}\big).
\eea
In other words, the operators of the projection measurement $\bP^{(4)} = \{\out{\varphi_j}{\varphi_j}\}_{j=0}^3$ are composed of cross-combination of operators of $\bE^{(4)}$ and $\bE_1$ on Naimark space as follows:
\bea
\ket{\varphi_j} = \frac{1}{\sqrt{2}}\sum_{a=0}^1 \ket{\phi^{(a)}_j}\ot\ket{a},
\eea
where $\ket{\phi^{(0)}_j} = \ket{\phi_j}$ for $j=0,1,2,3$. It implies that $\bP^{(4)}$ is a Naimark extension for both $\bE^{(4)}$ and $\bE_1$ because
\bea
&&\bra{\varphi_j}(\rho^S\ot\out{a}{a})\ket{\varphi_j}\\
&& =\big(\frac{1}{\sqrt{2}}\big)^2\sum_{k,l}\bra{\phi^{(k)}_j}\rho^S\ket{\phi^{(l)}_j}\iinner{k}{a}\iinner{a}{l}\\
&& =\frac{1}{2}\bra{\phi^{(a)}_j}\rho^S\ket{\phi^{(a)}_j},
\eea
for $a=0,1$. Moreover, $\bE_1$ is not an arbitrary POVM. It is definitely a different measurement from $\bE^{(4)}$ which allows us to measure the amount of coherence based on corresponding POVM depending on the quantum state ($\ket{0}$ or $\ket{1}$) on the first auxiliary system added for Naimark extension, that is,
\bea
C(\rho^{S_0}\ot\out{a}{a},\bP^{(4)}) = C(\rho^{S_0},\bE_a),
\eea
where $(a=0,1)$ and $\bE_0 = \bE^{(4)}$.

\begin{figure}[t]
\centering
\includegraphics[height=.7\textheight]{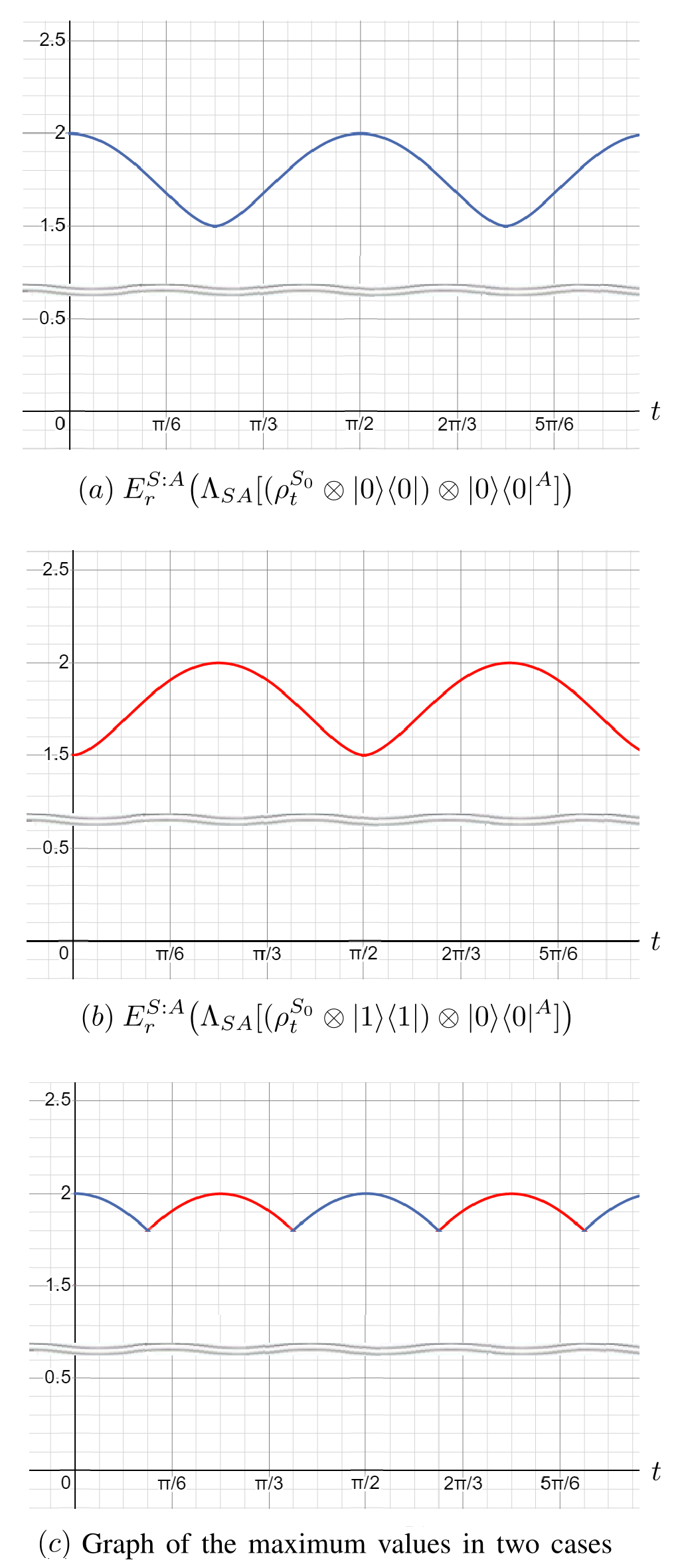}
\caption{\label{fig:fig-2} Graph of the amount of entanglement generated from $\rho^{S_0}_t$ by $\Lambda_{SA}$ when the first auxiliary state is fixed at (a) $\ket{0}$ (blue line) and (b) $\ket{1}$ (red line). (c) Graph of the amount of entanglement obtained from $\rho^{S_0}_t$ due to $\Lambda_{SA}$ by adjusting the first auxiliary state. There is maximal entanglement of $\log4$ at $t=0,\frac{\pi}{4},\frac{\pi}{2},$ and $\frac{3\pi}{4}$.}
\end{figure}

These $\bE_0$ and $\bE_1$ are measurements with the same resultant values for pure states in vertical positions above the circumference that pass through $\ket{0}$ and $\frac{1}{\sqrt{2}}(\ket{0}+i\ket{1})$ on the Bloch sphere surface. And $\Lambda_{SA}$ is a bipartite incoherent operation based on $\bP^{(4)}$ and $\{\out{i}{i}^A\}_{i=0}^{d_A-1}$, so it acts the same way in converting POVM-based coherence for $\bE_0$ and $\bE_1$ respectively into entanglement by mapping the state $\rho^{S_0}\ot\out{a}{a}\ot\out{0}{0}^A\ (a=0,1)$ to the state
\bea
&&\Lambda_{SA}[\rho^{S_0}\ot\out{a}{a}\ot\out{0}{0}^A] \\
&& = \frac{1}{2} \sum_{j,k}\bra{\phi^{(a)}_j}\rho^{S_0}\ket{\phi^{(a)}_k}\out{\varphi_j}{\varphi_k}\ot\out{j}{k}^A.
\eea
Then, for relative entropy of entanglement, it implies that
\bea
E_r^{S:A}\big(\Lambda_{SA}[(\rho^{S_0}\ot\out{a}{a}) \ot\out{0}{0}^A]\big) = C_r(\rho^{S_0},\bE_a).
\eea
Consider the phase state $\rho^{S_0}_t = \out{\psi_t}{\psi_t}$ with $\ket{\psi_t} = \cos t\ket{0}+ i\sin t\ket{1}\ (t\in[0,\pi))$. Then, when $t = 0$ or $\frac{\pi}{2}$, we have
\bea
&& E_r^{S:A}\big(\Lambda_{SA}[(\rho^{S_0}_t\ot\out{0}{0}) \ot\out{0}{0}^A]\big) = \log4 = 2\\
&&\textmd{and} \quad E_r^{S:A}\big(\Lambda_{SA}[(\rho^{S_0}_t\ot\out{1}{1}) \ot\out{0}{0}^A]\big) = 1.5.
\eea
On the contrary, when $t = \frac{\pi}{4}$ or $\frac{3\pi}{4}$, we have
\bea
 E_r^{S:A}\big(\Lambda_{SA}[(\rho^{S_0}_t\ot\out{0}{0}) \ot\out{0}{0}^A]\big) &=& 1.5\quad \textmd{and}\\
E_r^{S:A}\big(\Lambda_{SA}[(\rho^{S_0}_t\ot\out{1}{1}) \ot\out{0}{0}^A]\big) &=& \log4 = 2.
\eea

This shows the amount of entanglement generated from the same initial state can vary depending on the first auxiliary state, chosen between $\ket{0}$ and $\ket{1}$, and the initial states to generate the maximal entanglement of $\log4$ on $S:A$ are also different. Therefore, we can adopt a strategy that generates more entanglement by attaching different auxiliary states depending on the initial state (see Fig. \ref{fig:fig-2}).

This is valid for all Naimark extensions requiring auxiliary systems. The Naimark extension, which requires auxiliary systems, is generally composed of $n$ POVMs that are cross-coupled. We have just looked into the case of POVM which consists of pure state operators (this can be generalized analogously in the Naimark spaces with dimensions of $2\times n$). Next, we see that the above result remains the same even if the elements of POVM are general positive operators.
Let $\bE_0 = \{E_i^{(0)}\}_{i=0}^{n-1}$ be a POVM with $n$ general positive operators on $\cH^{S_0}$ as elements. Then, using the unitary operator $V = \sum_{i, a} A_{i,a} \ot \out{i}{a}$ (see Supplemental Material of Ref. \cite{Bischof1}) where $A_i$ is any measurement operator, the canonical Naimark extension $\bP = \{P_i\}_{i=0}^{n-1}$ on $\cH^S = \cH^{S_0}\ot\cH^{S_1} \ (d_{S_1}= n)$ of $\bE_0$ is defined as 
\bea
P_i &:=& V^{\dagger} \big(\id_{S_0} \ot \out{i}{i}\big) V \nonumber \\
&=& \sum_{a,b} A^{\dagger}_{i,a}A_{i,b}\ot\out{a}{b},
\eea
where the unitarity of $V$ requires the operators $A_{i,a}$ to satisfy $\sum_i A_{i,a}^{\dagger}A_{i,b} = \delta_{a,b}\id_{S_0}$ and $\sum_a A_{i,a}A_{j,a}^{\dagger} = \delta_{i,j}\id_{S_0}$ with $A_{i,0}^{\dagger}A_{i,0} = E_i^{(0)}$. It should, however, be noted that the condition $\sum_i A_{i,a}^{\dagger}A_{i,b} = \delta_{a,b}\id_{S_0}$ is not required for $P_i$ (the latter expression) to be a projective measurement.
Also, when $A_{i,a}^{\dagger}A_{i,a} = E_i^{(a)}$, each $\bE_a = \{E_i^{(a)}\}$ becomes a POVM on $\cH^{S_0}$, and  $\bP$ is a Naimark extension for all $\bE_a$ because
\bea
\textmd{tr}\big\{P_i(\rho^{S_0}\ot\out{a}{a})\big\} &=& \sum_{b,c}\Big\{\iinner{c}{a}\iinner{a}{b}\textmd{tr}\big(A^{\dagger}_{i,b}A_{i,c}\rho^{S_0}\big)\Big\}\\
&=& \textmd{tr}\big(A^{\dagger}_{i,a}A_{i,a}\rho^{S_0}\big) = \textmd{tr}\big(E_i^{(a)}\rho^{S_0}\big).
\eea
It implies that
\bea
C(\rho^{S_0}\ot\out{a}{a},\bP) = C(\rho^{S_0},\bE_a)
\eea
for $a = 0,1,\cdots, n-1$ from the Naimark extension property and POVM-based coherence theory.
Then, by Theorem \ref{thm 2} and Eq.(\ref{eq:12}), we have
\bea
E_r^{S:A}\big(\Lambda_{SA}[(\rho^{S_0}\ot\out{a}{a}) \ot\out{0}{0}^A]\big) = C_r(\rho^{S_0},\bE_a),
\eea
where $\Lambda_{SA}[\rho^{SA}] = U\rho^{SA}U^\dag$ is a bipartite block incoherent operation with the unitary $U$ of Eq.(\ref{eq:incoherent unitary}) on $\cH^{SA} = \cH^{S}\ot\cH^A$ $(d_A\geq n)$.

Therefore, by attaching a selected ancilla $S_1$ amongst the reference incoherent states $\out{a}{a} \ (a = 0,1,\cdots, n-1)$ to the initial state $\rho^{S_0}$, we can selectively obtain the optimal entangled state via a bipartite block-incoherent operation $\Lambda_{SA}$. This has the advantage of being able to convert selected POVM-based coherence, among multiple POVMs, from the same initial state to entanglement by preparing only one bipartite block-incoherent operation and just changing the first auxiliary state without having to prepare a different incoherent operation for each POVM.

\section{Conclusion and summary}

In this paper, we investigated the quantitative relationship on the transformation of block coherence to quantum entanglement. We discussed some cases in generating the entanglement directly from POVM-based coherence.
Although it is constrained to directly generate entanglement from POVM-based coherence through POVM-based incoherent operations, we have presented strategies with some merit by linking it to block-coherence through a Naimark extension. We discovered that even if the initial state is taken in a relatively small space, such as a qubit, we can induce maximal entanglement in the extended space through the Naimark extension. This offers a great advantage in the preservation of quantum resources, given that the state of a smaller space is more favorable for storage and transmission.
Another is that one block-incoherent operation allows us to selectively induce entanglement based on two or more POVMs. Thus, we can selectively achieve the optimal entanglement state from the same initial state without having to prepare a different incoherent operation for each POVM.

\begin{acknowledgments}
This project is supported by the National Natural Science Foundation of China (Grants No. 12050410232, No. 12031004, and No. 61877054).
\end{acknowledgments}


%

\end{document}